\renewcommand{\vec}[1]{\mathbf{#1}}
\newcommand{\id}{{\mathsf{I}}}
\newcommand{\unit}{{\mathsf{d}}}
\newcommand{\counit}{{\mathsf{e}}}
\newcommand{\tw}{\mathsf{X}}
\newcommand{\nosignal}{\iota}
\newcommand{\len}[1]{{\#{#1}}}
\newcommand{\Left}{\mathsf{L}}
\newcommand{\Right}{\mathsf{R}}
\def\moverlay{\mathpalette\mov@rlay} 
\def\mov@rlay#1#2{\leavevmode\vtop{%
\baselineskip\z@skip \lineskiplimit-\maxdimen 
\ialign{\hfil$#1##$\hfil\cr#2\crcr}}} 
\newcommand\twarr[2]{%
\mathrel{\mathop{\moverlay{\scriptstyle\xrightarrow{\,#1\,}\cr{\lower.2em\hbox{$\scriptstyle{}_{#2}$}}}}}}
\newcommand{\dtrans}[2]{\;\mathrel{\twarr{#1}{#2}}\;}
\newcommand{\trdtrans}[2]{\mathrel{(\dtrans{#1}{#2})}}
\newcommand{\ten}{\mathrel{\otimes}} 
\newcommand{\lollipop}{\mathrel{\multimap}}
\newcommand{\W}{\mathscr{W}}
\newcommand{\ie}{\textit{i.e.\ }}
\newcommand{\nb}{\textit{n.b.\ }}
\newcommand{\fig}[1]{{Fig.~{#1}}}
\newcommand{\figref}[1]{\fig{\ref{#1}}}
\newcommand{\var}[1]{{#1}}
\newcommand{\derivationRule}[3]{{\prooftree{\scriptstyle #1}\justifies{\scriptstyle #2}\using\ruleLabel{#3}\endprooftree}}
\newcommand{\reductionRule}[2]{{\prooftree{\scriptstyle #1}\justifies{\scriptstyle #2}\endprooftree}}
\newcommand{\N}{\ensuremath{\mathbb{N}}}
\newcommand{\typeJudgment}[3]{{\scriptstyle #1\;\ent\; {#2} \,\typ\, {#3}}}
\newcommand{\out}[1]{#1!}
\newcommand{\inp}[1]{#1?}
\newcommand{\lowerPic}[3]{\lower#1\hbox{\includegraphics[#2]{#3}}}
\newcommand{\Defeq}
 {\stackrel{\mathrm{def}}{=}}
\newcommand{\bnfEq}{\; ::= \;}
\newcommand{\bnfSep}{\; | \;}
\newcommand{\ent}{\vdash}
\newcommand{\typ}{\mathrel{:}}
\newcommand{\outLabelSub}[2]{\mathop{\mathsf{out_{#2}}}{\ifx\relax#1\relax\!\else#1\fi}}
\newcommand{\inLabel}[1]{\mathop{\mathsf{in}}{\ifx\relax#1\relax\!\else#1\fi}}
\newcommand{\inLabelSub}[2]{\mathop{\mathsf{in_{#2}}}{\ifx\relax#1\relax\!\else#1\fi}}
\newcommand{\openLabel}[1]{\mathop{\mathsf{open}}{\ifx\relax#1\relax\!\else#1\fi}}
\newcommand{\openLabelSub}[2]{\mathop{\mathsf{open_{#2}}}{\ifx\relax#1\relax\!\else#1\fi}}
\newcommand{\ruleLabel}[1]{\textsc{\scriptsize{(#1)}}}
\newcommand{\set}[2]{\{\,#1\;|\;#2\,\}}
\newcommand{\sort}[2]{(#1,\,#2)}
\newcommand{\comp}{\mathrel{;}}
\newcommand{\labelSep}{}
\newcommand{\prefixAtom}[2]{{{\frac{#1}{#2}}}}
\newcommand{\prefix}[3]%
{\mathchoice{\raise.1pc\hbox{$\prefixAtom{#1}{#2}$}\displaystyle#3} 
{\raise.1pc\hbox{$\prefixAtom{#1}{#2}$}\textstyle#3} 
{\prefixAtom{#1}{#2}\scriptstyle#3} 
{\prefixAtom{#1}{#2}\scriptscriptstyle#3}}
\newcommand{\muBind}[2]{{\mu{#1}.\,{#2}}}
\newtheorem{clm}{Claim}
\newtheorem{defn}[clm]{Definition}
\newtheorem{lem}[clm]{Lemma}
\newtheorem{thm}[clm]{Theorem}
\newtheorem{rmk}[clm]{Remark}
\newlength{\mylength}
\title{A non-interleaving process calculus for multi-party synchronisation}
\author{Pawe{\l} Soboci{\'n}ski \institute{ECS, University of Southampton}}
\begin{document}
\maketitle
\begin{abstract}
We introduce the wire calculus. Its  
dynamic features are inspired by Milner's 
\textsc{ccs}: a unary prefix operation, binary 
choice and a standard recursion construct.
Instead of an interleaving parallel composition operator 
there are operators for synchronisation along a common boundary ($;$) 
and non-communicating parallel composition ($\ten$)	.
The (operational) semantics is a labelled transition system
obtained with SOS rules. Bisimilarity 
is a congruence with respect to the operators of the language.
Quotienting terms by bisimilarity results in a compact closed category.
\end{abstract}

\section*{Introduction}

Process calculi such as \textsc{csp}, \textsc{sccs}, \textsc{ccs} and their 
descendants attract ongoing theoretical attention because of their claimed status as 
mathematical theories for the study of concurrent systems. Several 
concepts have arisen out of their study, to list just a few:
various kinds of bisimulation (weak, asynchronous, \textit{etc.}), 
various reasoning techniques and even competing schools of thought 
(`interleaving' concurrency vs `true' concurrency). 

The wire calculus does not have a \textsc{ccs}-like interleaving `$\parallel$' operator;
it has operators `$\ten$' and `$\comp$' instead. 
It retains two of Milner's other
influential contributions: (\textit{i}) bisimilarity as extensional 
equational theory for observable behaviour, and
(\textit{ii}) syntax-directed \textsc{sos}. The result is a process calculus which is `truly'
concurrent, has a formal graphical representation and on which `weak' bisimilarity
agrees with `strong' (ordinary) bisimilarity. 

The semantics of a (well-formed) syntactic expression in the calculus is a
labelled transition system (\textsc{lts}) with a chosen state and 
with an associated sort that describes the structure of the labels in the \textsc{lts}.
%
%
%
%
In the graphical representation this is a `black box'
with a left and a right boundary, where each boundary is simply a
number of ports to which wires can be connected. Components are
connected to each other by `wires'---whence the name of the calculus. 
The aforementioned sort of an \textsc{lts} 
prescribes the number of wires that can connect to each of the boundaries.

Notationally, transitions have upper and lower labels that correspond,
respectively, to the left and the right boundaries.
The labels themselves are words over some
alphabet $\Sigma$ of signals and an additional symbol
`$\nosignal$' that denotes the absence of signal.
%

The operators of the calculus fall naturally into `coordination' 
 operators for connecting components
and `dynamic' operators for specifying the behaviour of components.
There are two coordination operators, `$\ten$' and `$\comp$'. 
The dynamic operators consist of a pattern-matching unary prefix, a \textsc{csp}-like
choice operator and a recursion construct. Using the latter, one
can construct various wire constants, such as different kinds of forks.
These constants are related to Bruni, Lanese and Montanari's stateless
connectors~\cite{Bruni2005a} although there they are primitive entities;
here they are expressible using the primitive operations of the calculus.


\smallskip
The wire calculus draws its inspiration form many sources, the two most 
relevant ones come from the model theory of processes: tile logic~\cite{Gadducci2000} and 
the Span(Graph)~\cite{Katis1997a} algebra. Both frameworks are aimed at modeling the operational 
semantics of systems. Tile logic inspired the use of 
doubly-labeled transitions, as well as the use of specific rule formats 
in establishing the congruence property for bisimilarity. Work on Span(Graph)
informs the graphical presentation adopted for the basic operators 
of the wire calculus and inspired the running example.
Similarly, Abramsky's interaction categories~\cite{Abramsky1995} 
are a related paradigm:
in the wire calculus, processes (modulo bisimilarity) are the arrows of a category
and sequential composition is interaction. There is also a tensor product operation,
but the types of the wire calculus have a much simpler structure.

One can argue, however, that the wire calculus is closer in spirit to 
Milner's work on process algebra than the aforementioned frameworks. The 
calculus offers a syntax for system specification. 
The semantics is given by a small number of straightforward SOS rules.
No structural equivalence (or other theory) is required on terms: the
equational laws among them are actually derived via bisimilarity. 
The fact that operations built up from the primitive operators of the
wire calculus preserve bisimilarity means that there is also a shared perspective
with work on \textsc{sos} formats~\cite{Aceto1999} as well as 
recent work on \textsc{sos}-based
coordination languages~\cite{Bliudze2008}. 


\smallskip
An important technical feature is the
\emph{reflexivity} of labelled transition systems---any process can always `do nothing' 
(and by doing so synchronise with $\iota$'s on its boundaries). This idea is also
important in the Span(Graph) algebra; here it ensures that two unconnected components
are not globally synchronised.
The wire calculus also features
$\iota$-\emph{transitivity} which ensures that
unconnected components are not assumed to run at the same speeds and that internal
reduction is unobservable. Reflexivity and $\iota$-transitivity together ensure that
weak and strong bisimilarities coincide---they yield the same equivalence 
relation on processes.

\paragraph{Structure of the paper.}
The first two sections are an exposition of the 
syntax and semantics of the wire calculus. 
In \S\ref{sec:flipFlops} we model a simple synchronous circuit. The next
two sections are devoted to the development of the calculus' theory
and \S\ref{sec:directed} introduces a directed variant.
\section{Preliminaries}

A \emph{wire sort} $\Sigma$ is a (possibly infinite) set of \emph{signals}.
Let $\iota$ be an element regarded as ``absence of signal'', with $\iota\notin\Sigma$.
In order to reduce the burden of bureaucratic overhead, in this
paper we will consider calculi with one sort.
The generalisation to several sorts is a straightforward exercise. 

Before introducing the syntax of the calculus, let us describe those 
labelled transition systems (\textsc{lts}s) that will serve
as the semantic domain of expressions: $\sort{k}{l}$\emph{-components}. In the following
$\len{\vec{a}}$ means the length of a word $\vec{a}$. We shall use $\vec{\iota}$
for words that consist solely of $\iota$'s.
\begin{defn}[Components\label{defn:component}]\rm
Let $L\Defeq \Sigma+\{\iota\}$ and $k,l\geq 0$.
A $\sort{k}{l}$-\emph{transition}
is a labelled transition of the form 
$\dtrans{\vec{a}}{\vec{b}}$ where $\vec{a},\vec{b}\in L^*$, 
$\len{\vec{a}}=k$ and $\len{\vec{b}}=l$.
A $\sort{k}{l}$-\emph{component} $\mathscr{C}$
is a \emph{pointed}, \emph{reflexive} and $\iota$-\emph{transitive} 
\textsc{lts} $(v_0,V,T)$ of $\sort{k}{l}$-transitions. The meanings
of the adjectives are:
\begin{itemize}
\item \emph{Pointed}: there is a \emph{chosen state} $v_0\in V$; 
\item \emph{Reflexive}:
for all $v\in V$ there exists a transition
$v \dtrans{\vec{\iota}}{\vec{\iota}} v$; 
\item $\iota$-\emph{Transitive}: if 
$v \dtrans{\vec{\iota}}{\vec{\iota}} v_1$, $v_1\dtrans{\vec{a}}{\vec{b}} v_2$
and $v_2\dtrans{\vec{\iota}}{\vec{\iota}}v'$ then also
$v \dtrans{\vec{a}}{\vec{b}} v'$.
\end{itemize}
\end{defn}
\begin{tabular*}{\linewidth}[b]{p{.75\linewidth}@{\quad }c}
To the right is a graphical
rendering of $\mathscr{C}$; a box with $k$ wires on the left
and $l$ wires on the right:
&
$\lowerPic{.8pc}{width=2cm}{component}$
\end{tabular*} 

When the underlying \textsc{lts} is clear from the context we
often identify a component with its chosen state,
writing $v_0\typ\sort{k}{l}$ to indicate that $v_0$ is (the chosen state of) a 
$\sort{k}{l}$-component.

\begin{defn}[Bisimilarity]\rm
For two $(k,l)$-components $\mathscr{C},\,\mathscr{C}'$, a simulation
from $\mathscr{C}$ to $\mathscr{C'}$ is a relation $S$ that contains $(v_0,v_0')$
and satisfies the following:  if $(v,w)\in S$ and 
$v\dtrans{\vec{a}}{\vec{b}}v'$ then $\exists w'$ s.t.\ 
$w\dtrans{\vec{a}}{\vec{b}}w'$ and $(v',w')\in S$. 
A bisimulation is a relation $S$ such that itself and its inverse $S^{-1}$ are both simulations.
We say that $\mathscr{C},\,\mathscr{C}'$
are bisimilar and write $\mathscr{C}\sim\mathscr{C}'$
when there is some bisimulation relating $\mathscr{C}$ and $\mathscr{C'}$.
\end{defn}

\begin{rmk}\label{rmk:bisim}\rm
One pleasant feature of reflexive and 
$\iota$-transitive transition systems is that weak bisimilarity
agrees with ordinary (``strong'') bisimilarity, where actions labelled
solely with $\iota$'s are taken as silent. Indeed,
$\iota$-transitivity implies that any weak (bi)simulation in the sense of 
Milner~\cite{Milner1989} is also an ordinary (bi)simulation. 
\end{rmk}

\section{Wire calculus: syntax and semantics}

The syntax of the wire calculus is given in~\eqref{eq:syntax} and 
\eqref{eq:prefix} below:
\begin{equation}\label{eq:syntax}
P  \bnfEq 
\var{Y} \bnfSep P\comp P \bnfSep P\ten P\bnfSep
\prefix{M}{M}{}P \bnfSep P+P \bnfSep \muBind{\var{Y}\typ\tau}{P}
\end{equation}
\begin{equation}\label{eq:prefix}
M \bnfEq \epsilon \bnfSep \var{x} \bnfSep \lambda\var{x} \bnfSep \iota \bnfSep \sigma\in\Sigma 
\bnfSep MM
\end{equation}
All well-formed wire calculus terms have an associated \emph{sort}
$\sort{k}{l}$ where $k,\,l\geq 0$. We let $\tau$ range over sorts.
The semantics of $\tau$-sorted expression is a $\tau$-component.
There is a sort inference system for the syntax presented in \figref{fig:sorting}. In each rule, 
$\Gamma$ denotes the \emph{sorting context}: a finite set 
of (i) process variables with an assigned
sort, and (ii) signal variables.
We will consider only those terms $t$ that have a sort derived from
the empty sorting context: \ie only closed terms.

The syntactic categories in~\eqref{eq:syntax} are, in order:  
process variables,
composition, tensor, prefix, choice and recursion. 
The recursion operator
binds a process variable that is syntactically labelled with a sort. 
When we speak of \emph{terms} we mean abstract syntax 
where identities of bound variables are ignored. There are no additional
structural congruence rules.
In order to reduce the number
of parentheses when writing wire calculus expressions, we shall assume that `$\ten$' binds 
tighter than `$\comp$'.

The prefix operation is specific to the wire calculus and merits an
extended explanation: roughly, it is similar to  
input prefix in value-passing \textsc{ccs}, but has additional pattern-matching features.
A general prefix is of the form $\prefix{u}{v}{P}$ where $u$ and $v$ are strings generated 
by~\eqref{eq:prefix}: $\epsilon$ is the empty string, $x$ is a free occurrence of a 
\emph{signal variable}
from a denumerable set $\vec{\var{x}}=\{x,y,z,\dots\}$ of signal variables (that are disjoint from 
process variables), $\lambda\var{x}$ is the binding of a signal variable, $\iota$
is `no signal' and $\sigma\in\Sigma$ is a signal constant. Note that for a single variable
$x$, `$\lambda x$' can appear several times within a single prefix.
Prefix is a binding operation and binds possibly 
several signal variables that appear freely in $P$; a variable $x$ is bound
in $P$ precisely when $\lambda x$ occurs in $u$ or $v$. Let $bd(\prefix{u}{v}{})$
denote the set of variables that are bound by the prefix, \ie 
$\var{x}\in bd(\prefix{u}{v}) \Leftrightarrow \lambda x \text{ appears in } uv$. Conversely, let
$fr(\prefix{u}{v})$ denote the set of variables that appear freely in the prefix.

\begin{figure}[t]
\[
\reductionRule{\phantom{\Gamma}}{\typeJudgment{\Gamma,\,X\typ\tau}{X}{\tau}} 
\quad 
\reductionRule { \typeJudgment{\Gamma}{P}{\sort{k}{n}} \quad \typeJudgment{\Gamma}{R}{\sort{n}{l}} }
{ \typeJudgment{\Gamma}{P\comp R}{\sort{k}{l}} }
\quad
\reductionRule { \typeJudgment{\Gamma}{P}{\sort{k}{l}} \quad \typeJudgment{\Gamma}{Q}{\sort{m}{n}} }
{ \typeJudgment{\Gamma}{P\ten Q}{\sort{k+m}{l+n}} }
\quad
\reductionRule{ \typeJudgment{\Gamma,\,\var{Y}\typ\tau' }{P}{\tau} }
{ \typeJudgment{\Gamma}{\muBind{\var{Y}\typ\tau'}{P}}{\tau} }
\]
\[
\reductionRule { \len{u}=k,\,\len{v}=l,\,fr(\prefix{u}{v})\cap bd(\prefix{u}{v})=\varnothing,\,fr(\prefix{u}{v}{})\subseteq\Gamma \quad \typeJudgment{\Gamma,\, bd(\prefix{u}{v}{})}{P}{\sort{k}{l}} }
{ \typeJudgment{\Gamma}{ \prefix{u}{v}{P}}{\sort{k}{l}} }
\qquad
\reductionRule{ \typeJudgment{\Gamma}{P}{\tau} \quad \typeJudgment{\Gamma}{Q}{\tau} }
{ \typeJudgment{\Gamma}{P+Q}{\tau} }
\]
\caption{Sorting rules.\label{fig:sorting}}
\end{figure}
When writing wire calculus expressions we shall often omit the 
sort of the process variable when writing recursive definitions.
For each sort $\tau$ there is a term $0_\tau \Defeq \muBind{Y\typ\tau}{Y}$. 
We shall usually omit the subscript;
as will become clear below, $0$ has no non-trivial behaviour. 

The operations `$\comp$' and `$\ten$' have graphical representations that are
convenient for modelling physical systems; this is illustrated below.
\[
\lowerPic{0pc}{width=3cm}{composition}
\qquad \qquad \lowerPic{1.25pc}{width=1.8cm}{tensor}
\]

The semantics of a term $t\typ \sort{k}{l}$ is the  
$\sort{k}{l}$-component 
with \textsc{lts} 
defined using \textsc{sos} rules: 
\ruleLabel{Refl}, \ruleLabel{$\iota$L}, \ruleLabel{$\iota$R},
\ruleLabel{Cut}, \ruleLabel{Ten}, \ruleLabel{Pref}, \ruleLabel{$+\iota$},
\ruleLabel{$+$L}, \ruleLabel{$+$R} and \ruleLabel{Rec} of \figref{fig:sosRules}.
The rules \ruleLabel{Refl}, \ruleLabel{$\iota$L} and \ruleLabel{$\iota$R} guarantee
that the \textsc{lts} satisfies reflexivity and $\iota$-transitivity of 
Definition~\ref{defn:component}. Note that while
\emph{any} calculus with global rules akin to
\ruleLabel{Refl}, \ruleLabel{$\iota$L}, \ruleLabel{$\iota$R} for `silent' actions
is automatically $\iota$-transitive and reflexive with the aforementioned result
on bisimilarity (Remark~\ref{rmk:bisim}),
there are deeper reasons 
for the presence of these rules: see Remark~\ref{rmk:trueConcurrency}.

The rules \ruleLabel{Cut} and \ruleLabel{Ten} justify the intuition
that `$\comp$' is synchronisation along a common boundary and 
`$\ten$' is parallel composition sans synchronisation.
For a prefix $\prefix{u}{v}{P}$ 
a \emph{substitution} is a map $\sigma:bd(\prefix{u}{v}{})\to\Sigma+\{\iota\}$. 
For a term $t$ with possible free occurrences of signal 
variables in $bd(\prefix{u}{v}{})$ 
we write $t|_\sigma$ for the term resulting from the
substitution of $\sigma(x)$ for $x$ for each $x\in bd(\prefix{u}{v})$.
Assuming that $u,\,v$ have no free signal variables, we 
write $u|_\sigma,\,v|_\sigma$ for the strings over $\Sigma+\{\iota\}$ that result from replacing
each occurrence of `$\lambda x$' with $\sigma(x)$. 
The \textsc{sos} rule that governs the behaviour of prefix is \ruleLabel{Pref},
where $\sigma$ is any substitution.
The choice operator $+$ has a similar semantics to that of the \textsc{csp} choice operator 
`$\oblong$', see for instance~\cite{Brookes1988}: a choice is made only on the
occurrence of an observable action. Rules
\ruleLabel{$+\iota$}, \ruleLabel{$+$L} and \ruleLabel{$+$R} make this statement precise.
The recursion construct and its \textsc{sos} rule \ruleLabel{Rec} are standard. 
We assume that the substitution within the \ruleLabel{Rec} rule is non-capturing.
\begin{figure}[t]
\[
\derivationRule{}{P\dtrans{\vec{\iota}}{\vec{\iota}} P}{Refl}
\qquad
\derivationRule{P\dtrans{\vec{\iota}}{\vec{\iota}} R \quad R \dtrans{\vec{a}}{\vec{b}} Q}
{P\dtrans{\vec{a}}{\vec{b}}Q}{$\iota$L}
\qquad
\derivationRule{P\dtrans{\vec{a}}{\vec{b}} R\quad R\dtrans{\vec{\iota}}{\vec{\iota}} Q}
{P\dtrans{\vec{a}}{\vec{b}}Q}{$\iota$R}
\]
\[
\derivationRule{P\dtrans{\vec{a}}{\vec{c}} Q \quad R\dtrans{\vec{c}}{\vec{b}} S}
{P\comp R \dtrans {\vec{a}}{\vec{b}} Q\comp S}{Cut}
\qquad
\derivationRule{P\dtrans{\vec{a}}{\vec{b}} Q\quad R\dtrans{\vec{c}}{\vec{d}} S}
{P\ten R \dtrans{\vec{a}\labelSep \vec{c}}{\vec{b}\labelSep \vec{d}} Q\ten S}{Ten}
\]
\[
\derivationRule{} 
{\prefix{u}{v}{P} \dtrans{u|_\sigma}{v|_\sigma} P|_\sigma}{Pref} 
\quad 
\derivationRule{P[\muBind{\var{Y}}{P}/\var{Y}]\dtrans{\vec{a}}{\vec{b}}Q}
{\muBind{\var{Y}}{P} \dtrans{\vec{a}}{\vec{b}} Q}{Rec}
\quad
\derivationRule{P \dtrans{\vec{\iota}}{\vec{\iota}} Q\ R\dtrans{\vec{\iota}}{\vec{\iota}} S} 
{P + R\dtrans{\vec{\iota}}{\vec{\iota}} Q + S}{$+\iota$}
\quad
\derivationRule{P \dtrans{\vec{a}}{\vec{b}} Q \quad (\vec{ab}\neq\vec{\iota})}
{P+R \dtrans{\vec{a}}{\vec{b}} Q}{$+$L} 
\]
\caption{Semantics of the wire calculus. Symmetric
rule 
\ruleLabel{$+$R} omitted.
\label{fig:sosRules}}
\end{figure}

As first examples of terms we introduce two 
\emph{wire constants}\footnote{Those terms whose \textsc{lts} has a single state.},
their graphical representations
as well as an \textsc{sos} characterisation of their semantics.
\[
\id\Defeq \muBind{Y}{\prefix{\lambda x}{\lambda x}{Y}}:\sort{1}{1} 
\qquad
\lowerPic{-.1pc}{width=.5cm}{identityWire} 
\qquad
\derivationRule{}
{\id  \dtrans{a}{a} \id }{Id}
\]
\[
\tw\Defeq \muBind{Y}{\prefix{\lambda x\lambda y}{\lambda y\lambda x}{Y}}:\sort{2}{2} 
\qquad
\lowerPic{.2pc}{width=.5cm}{twistWire}
\qquad 
\derivationRule{}
{\tw \dtrans{a\labelSep b}{b\labelSep a} \tw}{Tw}
\]
\begin{rmk}\rm
Wire constants are an important concept of the wire calculus. Because
they have a single state, any expression built up from
constants using `$\comp$' and `$\ten$' has a single state. Bisimilar
wirings can thus be substituted in terms not only without altering
externally observable behaviour, but also without combinatorially
affecting the (intensional)
internal state.
\end{rmk}


\section{Global synchrony: flip-flops} \label{sec:flipFlops}
As a first application of the wire calculus, we shall model the following  
circuit\footnote{This example was proposed to the author by John Colley.}:
\begin{equation}\label{picture:toggleSystem}
\lowerPic{.8pc}{height=.8cm}{tripleToggle}
\end{equation}
where the signals that can be sent along the wire are $\{0,1\}$ (\nb $0$ is
not the absence of signal, that is represented by $\iota$). The
boxes labelled with $F_0$ and $F_1$ are toggle switches
with one bit of memory, the state of which is manifested by the subscript. 
They are simple abstraction of a flip-flop. The switches 
synchronise to the right only on a signal that corresponds to their current state and change 
state according to the signal on the left.
The expected behaviour is a synchronisation
of the whole system --- here a tertiary synchronisation.
In a single ``clock tick'' the middle component will change state to
$0$, the rightmost component to $1$ and the leftmost component will
remain at $0$. 
$F_0$ and $F_1$ are clearly symmetric and their behaviour is
characterised by the \textsc{sos} rules below (where $i\in\{0,1\}$)	.
\[
\derivationRule{}   
{F_i \dtrans{i}{i} F_i}{$i$set$i$}
\quad
\derivationRule{}
{F_i \dtrans{1-i}{i} F_{1-i}}{$i$set$1-i$}
\quad
\derivationRule{}
{F_i \dtrans{\iota}{\iota} F_i}{$i$Refl}
\]
In the wire calculus $F_0$ and $F_1$ can be defined by the terms:
\[
F_0\Defeq \muBind{\var{Y}}{\prefix{0}{0}{\var{Y}} + 
	\prefix{1}{0}{\muBind{\var{Z}}{(\prefix{1}{1}{\var{Z}}+\prefix{0}{1}{\var{Y}})}}}
\text{ and } 
F_1\Defeq \muBind{\var{Z}}{\prefix{1}{1}{\var{Z}} +
	\prefix{0}{1}{\muBind{\var{Y}}{(\prefix{0}{0}{\var{Y}} + \prefix{1}{0}{\var{Z}})}}}.
\]
In order to give an expression for the whole circuit, we need
two additional wire constants $\unit$ and $\counit$.
They are defined below, together with a graphical representation 
and an \textsc{sos} characterisation. Their mathematical significance 
will be explained in Section~\ref{sec:closed}.
\[
{\unit\typ\sort{0}{2}} \Defeq \muBind{Y}{\prefix{}{\lambda x \lambda x}{Y}} 
\qquad
\lowerPic{.4pc}{width=.5cm}{unitWire} 
\qquad
\derivationRule{}
{\unit \dtrans{}{a\labelSep a} \unit}{$\unit$} 
\]
\[
{\counit\typ\sort{2}{0}} \Defeq \muBind{Y}{\prefix{\lambda x \lambda x}{}{Y}} 
\qquad
\lowerPic{.4pc}{width=.5cm}{counitWire} 
\qquad
\derivationRule{}
{\counit \dtrans{a\labelSep a}{} \counit}{$\counit$}
\]
Returning to the example,
the wire calculus expression corresponding to~\eqref{picture:toggleSystem}
can be written down in the wire calculus
by scanning the picture from left to right.
\[A \Defeq \unit \comp (\id \ten (F_0\comp F_1\comp F_0)) \comp \counit.\]
What are its dynamics?  
Clearly $A$ has the sort $\sort{0}{0}$. It is immediate that any two terms
of this sort are extensionally equal (bisimilar) as
there is no externally observable behaviour. This should be intuitively
obvious because $A$ and other terms of sort $\sort{0}{0}$ are closed systems; 
they have no boundary on which an observer can interact. We can,
however, examine the intensional internal state of the system and the possible internal state
transitions.
The semantics is given 
structurally in a compositional way -- so any non-trivial behaviour of $A$
is the result of non-trivial behaviour of its components.
Using \ruleLabel{Cut}
the only possible behaviour of $F_0\comp F_1$ is of the form (i) below.
\[
\derivationRule{
F_0\dtrans{x}{z} X \quad
F_1\dtrans{z}{y} Y
}
{ F_0\comp F_1 \dtrans{x}{y} X\comp Y }{Cut}
\quad
\derivationRule{
F_0\dtrans{x}{0} X \quad
F_1\dtrans{0}{1} F_0
}
{ F_0\comp F_1 \dtrans{x}{1} X\comp F_0 }{Cut}
\quad 
\derivationRule{
F_0 \comp F_1 \dtrans{x}{1} X\comp F_0  \quad
F_0\dtrans{1}{0} F_1
}
{ F_0\comp F_1 \comp F_0 \dtrans{x}{0} X\comp F_0 \comp F_1 }{Cut}
\]
\[
\scriptstyle (i) \quad\qquad\qquad\qquad\qquad\qquad (ii) \qquad\quad\qquad\qquad\qquad\qquad (iii)
\]
The dynamics of $F_0$ and $F_1$ imply that $z=0$ and $y=1$, and the latter
also implies that $Y=F_0$, hence the occurrence of \ruleLabel{Cut} must be of the
form (ii) above. Composing with $F_0$ on the right yields (iii).
\[
\derivationRule{}
{
\id\ten (F_0\comp F_1 \comp F_0) \dtrans{w\labelSep x}{w\labelSep 0} \id\ten (X\comp F_0 \comp F_1)
}
{$\ten$}
\]
\[
\scriptstyle (iv)
\]
Next, tensoring with $\id$ yields (iv) above.
In order for \ruleLabel{Cut} to apply after post-composing with $\counit$, 
$w$ must be equal to $0$, yielding (iv) below.
\[
\derivationRule{}
{
(\id\ten (F_0 \comp F_1 \comp F_0)) \comp \counit 
\dtrans {0\labelSep x}{} 
(\id\ten (X\comp F_0 \comp F_1)) \comp \counit
}{Cut} \qquad
\derivationRule{}
{
\unit\comp (\id\ten (F_0 \comp F_1\comp F_0)) \comp \counit \dtrans {}{} 
\unit\comp (\id\ten (F_0\comp F_0 \comp F_1)) \comp \counit
}{Cut}
\]
\[
\scriptstyle (iv) \quad\qquad\qquad\qquad\qquad\qquad\quad(v)
\]
The final application of \ruleLabel{Cut} forces $x=0$ and hence $X=F_0$, 
resulting in (v). It is easy to repeat this argument to show that the
entire system has three states --- intuitively  `$1$ goes around the ring'.
It is equally easy to expand this example to synchronisations that involve 
arbitrarily many components. Such behaviour is not easily handled using
existing process calculi.

\begin{rmk}\label{rmk:trueConcurrency}\rm
Consider the wire calculus expression represented by 
 ($\dag$).
\[
\lowerPic{1.7pc}{height=1.5cm}{toggle3}\quad (\dag)
\]
The two components are unconnected. 
As a consequence of \ruleLabel{Refl} and \ruleLabel{Ten},
a ``step'' of the entire system is either a (i) step of the
upper component, (ii) step of the lower component or (iii) a `truly-concurrent' step
of the two components. The presence of the rules
\ruleLabel{$\iota$L} and \ruleLabel{$\iota$R} then 
ensures that behaviour is not scheduled
by a single global clock --- that would be unreasonable for many important scenarios.
Indeed, in the wire calculus synchronisations occur only between explicitly 
connected components.
\end{rmk}


\section{Properties}
\label{sec:properties}
 
In this section we state and prove the main properties of the wire calculus: bisimilarity
is a congruence and terms up-to-bisimilarity are the arrows of a symmetric monoidal category
$\W$. In fact, this category has further structure that will be explained in 
section~\ref{sec:closed}.	
 
Let $P\trdtrans{\vec{a}_k}{\vec{b}_k} Q$ denote trace
$P\dtrans{\vec{a}_1}{\vec{b}_1}P_1\cdots P_{k-1} \dtrans{\vec{a}_k}{\vec{b}_k} Q$,
for some $P_1,\dots, P_{k-1}$.
\begin{lem}\label{lem:traces}Let $P\typ\sort{k}{l}$ and $Q\typ\sort{l}{m}$, then:
\begin{enumerate}[(i)]
\item
If $P\comp Q\dtrans{\vec{a}}{\vec{b}} R$ then $R=P'\comp Q'$ and there exist traces
\[
P\trdtrans{\vec{\iota}}{\vec{d}_k} 
P_{k} \dtrans{\vec{a}}{\vec{c}} 
P_l' 
\trdtrans{\vec{\iota}}{\vec{e}_{l}} P'
,\quad 
Q\trdtrans{\vec{d}_k}{\vec{\iota}} 
Q_k \dtrans{\vec{c}}{\vec{b}} Q_l' 
\trdtrans{\vec{e}_l}{\vec{\iota}} Q';
\]
\item 
If $P\ten Q\dtrans{\vec{a}}{\vec{b}} R$ then $R=P'\ten Q'$ and there exist transitions:
\[
P \dtrans{\vec{a}_1}{\vec{b}_1} P',\quad 
Q \dtrans{\vec{a}_2}{\vec{b}_2} Q' \text{ with }
\vec{a}=\vec{a}_1\vec{a}_2\text{ and }\vec{b}=\vec{b}_1\vec{b}_2.
\]
\end{enumerate}
\end{lem}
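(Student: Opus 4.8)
The plan is to prove each clause by induction on the derivation of the given transition, treating it as an \emph{inversion} (decomposition) result. Since $P\comp Q$ and $P\ten Q$ are explicitly composite, the only rules whose conclusion can match are \ruleLabel{Refl}, \ruleLabel{$\iota$L}, \ruleLabel{$\iota$R} together with \ruleLabel{Cut} (for clause (i)) or \ruleLabel{Ten} (for clause (ii)); the prefix, choice and recursion rules are inapplicable because their conclusions have the wrong outermost shape. This yields a clean finite case analysis at each step, with \ruleLabel{Refl} and the structural rule (\ruleLabel{Cut}/\ruleLabel{Ten}) as base cases and \ruleLabel{$\iota$L}, \ruleLabel{$\iota$R} as the inductive cases, both of whose premises are strictly smaller derivations.

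I would dispatch clause (ii) first, as it is simpler. For \ruleLabel{Ten} the decomposition is immediate, and for \ruleLabel{Refl} one splits $\vec\iota$ as $\vec\iota\,\vec\iota$ and uses \ruleLabel{Refl} on each factor. In the \ruleLabel{$\iota$L} case the leading silent step $P\ten Q\dtrans{\vec\iota}{\vec\iota}S$ decomposes, by the induction hypothesis, into \emph{fully silent} component steps $P\dtrans{\vec\iota}{\vec\iota}P_1$, $Q\dtrans{\vec\iota}{\vec\iota}Q_1$ (because $\vec\iota$ can split only into $\vec\iota$'s); applying the induction hypothesis to the second premise and then the component-level \ruleLabel{$\iota$L} collapses each pair into a single transition, and \ruleLabel{$\iota$R} is symmetric. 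The point is that for $\ten$ every auxiliary silent step of the composite comes from genuinely silent steps of both components, so $\iota$-transitivity lets one recover \emph{single} transitions rather than traces.

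For clause (i) the same induction runs, but the silent auxiliary steps behave quite differently, and this is where the real content lies. A silent composite step $P\comp Q\dtrans{\vec\iota}{\vec\iota}S$ need not arise from silent component steps: by \ruleLabel{Cut} its derivation may carry an arbitrary internal word along the shared boundary, so the induction hypothesis only yields a \emph{silent-top} trace for $P$ (carrying words $\vec d_i$ on its lower boundary) together with a matching \emph{silent-bottom} trace for $Q$. These are exactly the prefix/suffix traces in the statement, and the $\iota$-transitivity collapse used for clause (ii) is unavailable precisely because the internal labels $\vec d_i,\vec e_i$ are in general not $\iota$. In the \ruleLabel{$\iota$L} case I would therefore absorb the entire silent-top trace obtained from the first premise into the prefix of the decomposition obtained from the second premise (dually the silent-bottom trace into $Q$'s prefix), and symmetrically append to the suffix in the \ruleLabel{$\iota$R} case; the base cases \ruleLabel{Cut} and \ruleLabel{Refl} instantiate the statement with empty prefix and suffix.

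The main obstacle I anticipate is the bookkeeping in these concatenations: one must verify that each combined trace still has the prescribed shape with exactly \emph{one} externally labelled step $\dtrans{\vec a}{\vec c}$ (resp.\ $\dtrans{\vec c}{\vec b}$), and that the internal words produced by $P$ on its lower boundary agree step-by-step with those consumed by $Q$ on its upper boundary, so that the two traces remain genuinely \ruleLabel{Cut}-compatible. Keeping the indices and the matching of internal labels straight through repeated prepending and appending is the delicate part; conceptually, though, everything reduces to the observation that a silent step of a cut is a \ruleLabel{Cut} with silent \emph{outer} labels but possibly active \emph{inner} communication, which is exactly why traces, and not single transitions, are forced in clause (i).
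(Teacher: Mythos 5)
Your proposal is correct and follows essentially the same route as the paper: induction on the derivation of the composite transition, with \ruleLabel{Refl} and \ruleLabel{Cut}/\ruleLabel{Ten} as base cases, \ruleLabel{$\iota$L}/\ruleLabel{$\iota$R} as inductive cases handled by applying the induction hypothesis to both premises and concatenating the resulting traces (collapsing to single transitions via $\iota$-transitivity in the $\ten$ case). Your added observation about why genuinely non-silent internal words force traces rather than single transitions in clause (i) is accurate and just makes explicit what the paper leaves to "composition in the obvious way."
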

\begin{proof}
Induction on the derivation of $P\comp Q\dtrans{\vec{a}}{\vec{b}} R$. The base cases are:
a single application of \ruleLabel{Cut} whence $P\dtrans{\vec{a}}{\vec{c}} P'$ and 
$Q\dtrans{\vec{c}}{\vec{b}} Q'$
and $R=P'\comp Q'$, and a single application of \ruleLabel{Refl} whence $R=P\comp Q$ but
by reflexivity we have also that $P\dtrans{\vec{\iota}}{\vec{\iota}} P$ and
$Q\dtrans{\vec{\iota}}{\vec{\iota}} Q$. The first inductive step is an application of
\ruleLabel{$\iota$L} where $P\comp Q\dtrans{\vec{\iota}}{\vec{\iota}} R$ and 
$R\dtrans{\vec{a}}{\vec{b}} S$. Applying the inductive hypothesis to 
the first transition we get $R=P''\comp Q''$ and suitable traces. Now we apply the
inductive hypothesis again to $R''\comp Q''\dtrans{\vec{a}}{\vec{b}} S$ and
obtain $S=P'\comp Q'$ together with suitable traces. The required traces are obtained
by composition in the obvious way. 
The second inductive step is an application of 
\ruleLabel{$\iota$R} and is symmetric.
Part (ii) involves a similar, if easier, induction.
\end{proof}

\begin{thm}[Bisimilarity is a congruence]\label{thm:coordinationCongruence}
Let $P,Q\typ \sort{k}{l}$, 
$R\typ \sort{m}{n}$, 
$S\typ \sort{l}{l'}$ 
and $T\typ\sort{k'}{k}$ and suppose that $P\sim Q$. Then: 
\begin{enumerate}[(i)]
\item ${P\comp S}\sim{Q\comp S}$ and ${T\comp P}\sim{T\comp Q}$;
\item ${P\!\ten\! R}\!\sim\!{Q\!\ten\! R}$ and ${R\!\ten\! P}\!\sim\!{R\!\ten\! Q}$;
\item $\prefix{u}{v}{P} \sim \prefix{u}{v}{Q}$;
\item ${P+ R} \sim {Q+R}$ and ${R+P} \sim {R+Q}$.
\end{enumerate}
\end{thm}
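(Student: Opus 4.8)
The plan is to treat each operator by exhibiting a candidate relation built from a bisimulation witnessing $P\sim Q$ and checking that it is a bisimulation. Since a relation $S$ is a bisimulation iff $S^{-1}$ is, and since $\sim$ is symmetric, in every case it suffices to verify that the candidate is a \emph{simulation}; the two variants in each item (for instance ${P\comp S}\sim{Q\comp S}$ and ${T\comp P}\sim{T\comp Q}$) are mirror images of one another, so I would prove one and obtain the other by the same argument read right-to-left (for $\ten$ and $+$ this is literally the symmetry of the rules \ruleLabel{Ten} and \ruleLabel{$+$L}/\ruleLabel{$+$R}). Throughout I use that $\sim$ is itself a bisimulation, so it can be folded into any candidate.

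The easy cases are (ii)--(iv). For tensor, take $U=\{(A\ten C,\,B\ten C)\mid A\sim B\}$: if $A\ten C\dtrans{\vec{a}}{\vec{b}}R$ then Lemma~\ref{lem:traces}(ii) splits this as $A\dtrans{\vec{a}_1}{\vec{b}_1}A'$ and $C\dtrans{\vec{a}_2}{\vec{b}_2}C'$ with $R=A'\ten C'$, so matching the first transition from $B$ via $A\sim B$ and recombining with \ruleLabel{Ten} lands in $U$. For choice, take $U=\{(A+C,\,B+C)\mid A\sim B\}\cup{\sim}$ and case-split on the rule producing the transition of $A+C$: \ruleLabel{Refl} and \ruleLabel{$+\iota$} stay inside the $+$-block, while \ruleLabel{$+$L}/\ruleLabel{$+$R} expose a genuine transition of $A$ (resp.\ $C$) that is matched through $\sim$, after which the residual is related by $\sim$ itself. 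For prefix, $U=\{(\prefix{u}{v}{P},\,\prefix{u}{v}{Q})\mid P\sim Q\}\cup{\sim}$: the only non-silent move is \ruleLabel{Pref}, giving $\prefix{u}{v}{P}\dtrans{u|_\sigma}{v|_\sigma}P|_\sigma$, matched by the identical instance of \ruleLabel{Pref} on the right. The one genuine subtlety here is that one needs $P\sim Q$ to entail $P|_\sigma\sim Q|_\sigma$ for the closing substitution $\sigma$; this is exactly the statement that bisimilarity is stable under substitution of signal variables, which holds because the residuals are closed and $\sim$ on terms with free signal variables is understood as bisimilarity under all closing instantiations.

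The real work is (i). Here a single transition of $P\comp S$ is \emph{not} a single synchronising step but, by Lemma~\ref{lem:traces}(i), an interleaving of a trace of $P$ and a trace of $S$ whose shared-boundary labels agree and whose external segments are padded by $\vec{\iota}$'s. I would take $U=\{(A\comp C,\,B\comp C)\mid A\sim B\}$ and, given $A\comp C\dtrans{\vec{a}}{\vec{b}}A'\comp C'$, first apply Lemma~\ref{lem:traces}(i) to extract the two traces. The key lemma to prove along the way is that $\sim$ lifts from single steps to traces: by induction on length, $A\sim B$ and a trace $A\trdtrans{\vec{a}_k}{\vec{b}_k}A'$ yield a trace of $B$ with the \emph{same} sequence of labels ending in some $B'\sim A'$. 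Applying this to $A$'s component of the decomposition gives a trace of $B$ matching it label-for-label; the $C$-component is reused verbatim, and since the shared-boundary words still agree I reassemble a transition $B\comp C\dtrans{\vec{a}}{\vec{b}}B'\comp C'$ by the converse of the decomposition --- repeated \ruleLabel{Cut} on the aligned steps, with the leading and trailing all-$\iota$ segments collapsed into the single observable step using reflexivity and $\iota$-transitivity (\ruleLabel{Refl}, \ruleLabel{$\iota$L}, \ruleLabel{$\iota$R}). Then $A'\sim B'$ places $(A'\comp C',\,B'\comp C')$ in $U$.

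The main obstacle is precisely this reassembly in case (i): Lemma~\ref{lem:traces} is stated as a decomposition, so I must establish its converse and check that collapsing the $\vec{\iota}$-padded segments back into one transition is legitimate --- this is where reflexivity and $\iota$-transitivity do the essential work, and it is the reason those structural rules are baked into the definition of a component. Once the trace-lifting of $\sim$ and this recomposition are in hand, all four items follow uniformly.
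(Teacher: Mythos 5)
Your proposal is correct and follows essentially the same route as the paper: for (i) it exhibits the same candidate relation $\{(A\comp C,\,B\comp C)\mid A\sim B\}$, decomposes a transition of the composite via Lemma~\ref{lem:traces}(i), matches the $A$-trace by a $B$-trace using $A\sim B$, and reassembles with repeated \ruleLabel{Cut} and \ruleLabel{$\iota$L}/\ruleLabel{$\iota$R}, with (ii) handled via Lemma~\ref{lem:traces}(ii) and (iii)--(iv) dispatched directly. Your explicit attention to the trace-lifting of $\sim$ and to stability of $\sim$ under closing substitutions in the prefix case merely spells out steps the paper leaves implicit as ``straightforward.''
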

\begin{proof}
Part (i) follows easily from the conclusion of Lemma~\ref{lem:traces}, (i);
in the first case one shows that 
$A\Defeq \set{(P\comp R,\, Q\comp R)}{P,Q\typ\sort{k}{l},\,R\typ\sort{l}{m},\,P\sim Q}$
is a bisimulation. Indeed, if $P\comp R\dtrans{\vec{a}}{\vec{b}} P'\comp R'$
then there exists an appropriate trace from $P$ to $P'$ and from $R$ to $R'$.
As $P\sim Q$, there exists $Q'$ with $P'\sim Q'$
and an equal trace from $Q$ to $Q'$. We use the traces from $Q$ and $R$ to
obtain a transition
$Q\comp R\dtrans{\vec{a}}{\vec{b}} Q'\comp R'$ 
via repeated applications of \ruleLabel{Cut} and \ruleLabel{$\iota$L,R}. 
Clearly $(P'\comp R',Q'\comp R')\in A$.
Similarly, (ii) follows from Lemma~\ref{lem:traces}, (ii); (iii) and (iv) are
straightforward.
\end{proof}


Let 
$\id_k \Defeq \muBind{Y}{\prefix{\lambda x_1 \dots \lambda x_k}{\lambda x_1\dots \lambda x_k}{Y}}$
and
$\tw_{k,\,l} \Defeq \muBind{Y}{\prefix{\lambda x_1\dots\lambda x_k\lambda y_1\dots\lambda y_l}%
{\lambda y_1\dots\lambda y_l\lambda x_1\dots\lambda x_k}{Y}}$.
$\tw_{k,\,l}$ is drawn as 
$\lowerPic{.7pc}{height=.8cm}{bigTwist}$.

\begin{lem}[Categorical axioms]\label{lem:catAxioms}
 In the 
statements below we implicitly universally quantify over all 
$k,l,m,n,u,v\in \N$, $P\typ\sort{k}{l}$, $Q\typ\sort{l}{m}$, $R\typ\sort{m}{n}$,
$S\typ\sort{n}{u}$ and $T\typ\sort{u}{v}$.
\begin{enumerate}[(i)]
\item ${(P\comp Q)\comp R} \sim {P\comp (Q\comp R)}$;
\item ${P \comp \id_l}  \sim {{P} \ \sim\  {\id_k \comp P}}$;
\item ${(P\ten R)\ten T} \sim {P\ten (R\ten T)}$;
\item ${(P\ten S)\comp (Q\ten T)} \sim {(P\comp Q)\ten (S\comp T)}$;
\item $(P\ten R)\comp \tw_{l,n}\ \sim\ \tw_{k,m}\comp(R\ten P)$; 
\qquad\qquad $\left(\lowerPic{.5pc}{width=1.2cm}{twistRight}\ \sim\ 
\lowerPic{.5pc}{width=1.2cm}{twistLeft}\right)$	
\vspace{-.5pc}
\item  $\tw_{k,l}\comp\tw_{l,k}\ \sim\ \id_{k+l}$.
\end{enumerate}
\end{lem}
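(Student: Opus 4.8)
The plan is to prove each equivalence by exhibiting a \emph{structural} bisimulation: for an axiom $L\sim R$ I take the relation $S$ consisting of all pairs of terms obtained by instantiating the two sides over the quantified components (closing $S$ under inverse for the symmetric clauses), and verify the bisimulation clause. In every case Lemma~\ref{lem:traces} is the workhorse: it decomposes a transition of the outermost operator into transitions/traces of its immediate subterms, after which I reassemble those data in the shape demanded by the other side. Before starting I would record the evident behaviour of the two wire constants involved --- namely $\id_k\dtrans{\vec a}{\vec b}\id_k$ iff $\vec a=\vec b$, and $\tw_{k,l}\dtrans{\vec a\vec b}{\vec b\vec a}\tw_{k,l}$ with $\len{\vec a}=k$, $\len{\vec b}=l$ --- each proved from \ruleLabel{Rec}, \ruleLabel{Pref} and $\iota$-transitivity.

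Four of the six clauses are light. For (iii) the relation $\{((P\ten R)\ten T,\,P\ten(R\ten T))\}$ works because Lemma~\ref{lem:traces}(ii) characterises every $\ten$-transition by a clean splitting of the labels into single transitions of the factors, and concatenation of words is associative. For (ii), (v) and (vi) the component adjacent to $P$ (respectively the twists) is a single-state constant whose transitions satisfy a rigid label constraint; feeding that constraint into the traces of Lemma~\ref{lem:traces}(i) forces all the padding labels $\vec d_i,\vec e_j$ to be $\vec\iota$, so $\iota$-transitivity collapses the traces to a single \ruleLabel{Cut}. What remains is then a finite label computation: for (ii) that $\id$ transmits the adjacent boundary word of $P$ unchanged, for (v) that routing $P\ten R$ through a twist performs exactly the swap that $\tw\comp(R\ten P)$ performs, and for (vi) that the two twists compose to the identity permutation. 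I expect these to be routine.

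The substance is in (i) and (iv), where both subterms flanking the internal boundary are arbitrary and their traces do \emph{not} collapse. Here I would induct on the length of the traces produced by Lemma~\ref{lem:traces}(i). For associativity, a transition of $(P\comp Q)\comp R$ yields a trace of $P\comp Q$ over the left internal wire together with a trace of $R$; decomposing the former again by Lemma~\ref{lem:traces}(i) exposes the $P$--$Q$ internal communication. The same three streams of data --- the external word, the $P$--$Q$ wire word, and the $Q$--$R$ wire word --- are exactly what a transition of $P\comp(Q\comp R)$ consumes, only grouped as $P$ against $(Q\comp R)$; I reassemble them by running $Q$ against $R$ first, using \ruleLabel{Refl} to let the idle factor emit $\vec\iota$ while the other acts and \ruleLabel{$\iota$L}/\ruleLabel{$\iota$R} to concatenate the successive rounds. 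Clause (iv) is the same manoeuvre with $\ten$ in place of the re-bracketing: the single internal wire of length $l+u$ on the left-hand side is split, round by round, by Lemma~\ref{lem:traces}(ii) into an $l$-wire feeding $P\comp Q$ and a $u$-wire feeding $S\comp T$, and reassembled as a tensor of the two resulting composites.

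The main obstacle, then, is combinatorial rather than conceptual: keeping the bookkeeping straight while interleaving the several internal-communication rounds, and checking at each inductive step that the idle factor can be padded with reflexive $\vec\iota$-transitions so that the rounds line up. Reflexivity and $\iota$-transitivity are precisely what make this padding legal, and once the traces are realigned the residual states lie in $S$ by construction, closing the bisimulation.
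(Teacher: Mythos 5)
Your proposal matches the paper's proof in essence: both decompose transitions of the composite terms via Lemma~\ref{lem:traces}, then reassemble the resulting traces of the subterms using \ruleLabel{Refl}, \ruleLabel{Cut}/\ruleLabel{Ten} and \ruleLabel{$\iota$L}/\ruleLabel{$\iota$R} to build the matching transition on the other side, with (i) and (iv) carrying the real work. The extra detail you give for (ii), (v) and (vi) --- the single-state constants forcing the padding labels to be $\vec{\iota}$ so that $\iota$-transitivity collapses the traces --- is exactly the reason the paper can dismiss those clauses as straightforward or trivial.
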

\begin{proof}
(i) Here we use Lemma~\ref{lem:traces}, (ii) to decompose a transition
from $(P\comp Q)\comp R$ into traces of $P$, $Q$ and $R$ and then, 
using reflexivity, compose into a transition of $P\comp (Q\comp R)$. 
Indeed, suppose that $(P\comp Q)\comp R \dtrans{\vec{a}}{\vec{b}} (P' \comp Q')\comp R'$.
Then: 
\begin{multline*}
P\comp Q \trdtrans{\vec{\iota}}{\vec{d_i}} P_k\comp Q_k \dtrans{\vec{a}}{\vec{c}}
P_l'\comp Q_l' \trdtrans{\vec{\iota}}{\vec{e_i}} P'\comp Q' 
\text{ and }
R \trdtrans{\vec{d_i}}{\vec{\iota}} 
R_k \dtrans{\vec{c}}{\vec{b}} R_l' 
\trdtrans{\vec{e_i}}{\vec{\iota}} R'.
\end{multline*}
Decomposing the first trace into traces of $P$ and $Q$ yields
\begin{multline*}
P \trdtrans{\vec{\iota}}{\vec{d_{1i}}} P_{1k_1} 
\dtrans{\vec{\iota}}{\vec{c_1}} 
P_{1l_1}' 
\trdtrans{\vec{\iota}}{\vec{e_{1i}}} P_{1} \cdots 
P_k  \trdtrans{\vec{\iota}}{\vec{d_i'}} \dtrans{\vec{a}}{\vec{c}'} 
\trdtrans{\vec{\iota}}{\vec{e_i'}} P_l' \cdots P' \\
\text{ and }
Q \trdtrans{\!\!\vec{d_{1i}}\!\!}{\vec{\iota}} Q_{1k_1} 
\dtrans{\vec{c_1}}{\vec{d_1}} 
Q_{1l_1}'
\trdtrans{\!\!\vec{e_{1i}}\!\!}{\vec{\iota}}
Q_{1}\cdots Q_k
 \trdtrans{\!\!\vec{d_i}'\!\!}{\vec{\iota}} \dtrans{\vec{c}'}{\vec{c}} 
 \trdtrans{\!\!\vec{e_i}'\!\!}{\vec{\iota}} Q_l' \cdots Q'.
\end{multline*}
Using reflexivity and \ruleLabel{Cut} we obtain
\begin{multline*}
Q\comp R  
\trdtrans{\vec{d_{1i}}}{\vec{\iota}} Q_{1k_1}\comp R
\dtrans{\vec{c_1}}{\iota} Q'_{1l_1}\comp R_1 
\trdtrans{\vec{e_{1i}}}{\vec{\iota}} Q_1\comp R_1 \cdots 
Q_k \comp R_k 
\trdtrans{\vec{d_i'}}{\vec{\iota}} \dtrans{\vec{c}'}{\vec{b}} 
\trdtrans{\vec{e_i'}}{\vec{\iota}} Q_l'\comp R_l' \cdots Q'\comp R'.
\end{multline*}
Now by repeated applications of \ruleLabel{Cut} followed by \ruleLabel{$\iota$L,R}
we obtain the required transition $P\comp (Q\comp R) \dtrans{\vec{a}}{\vec{b}} P'\comp(Q'\comp R')$.
Similarly, starting with a transition from $P\comp(Q\comp R)$ one reconstructs a matching
transition from $(P\comp Q)\comp R$.

Parts (ii) and (iii) are straightforward. For (iv), a transition 
$(P\ten R)\comp (Q\ten S) \dtrans{\vec{aa'}}{\vec{bb'}} (P'\ten R')\comp (Q'\ten S')$
decomposes first into traces:
\[
P\ten R \trdtrans{\vec{\iota}}{\vec{d_id_i'}} 
P_k\ten R_k  \dtrans{\vec{aa'}}{\vec{cc'}}
P_l'\ten R_l' \trdtrans{\vec{\iota}}{\vec{e_ie_i'}} P'\ten R'
\]
\[
Q\ten S \trdtrans{\vec{d_id_i'}}{\vec{\iota}} 
Q_k\ten S_k \dtrans{\vec{cc'}}{\vec{bb'}} 
Q_l'\ten S_l' 
\trdtrans{\vec{e_ie_i'}}{\vec{\iota}}
Q'\ten S'
\]
and then into individual traces of $P$, $R$, $Q$ and $S$:
\[
P \trdtrans{\vec{\iota}}{\vec{d_i}}
P_k \dtrans{\vec{a}}{\vec{c}}
P_l' \trdtrans{\vec{\iota}}{\vec{e_i}} P'
\qquad 
R \trdtrans{\vec{\iota}}{\vec{d_i'}} 
R_k  \dtrans{\vec{a'}}{\vec{c'}}
R_l' \trdtrans{\vec{\iota}}{\vec{e_i'}} R'
\]
\[
Q \trdtrans{\vec{d_i}}{\vec{\iota}} 
Q_k \dtrans{\vec{c}}{\vec{b}} 
Q_l' 
\trdtrans{\vec{e_i}}{\vec{\iota}}
Q'
\qquad
S \trdtrans{\vec{d_i'}}{\vec{\iota}}
S_k \dtrans{\vec{c'}}{\vec{b'}} S_l' 
\trdtrans{\vec{e_i'}}{\vec{\iota}}
S'
\]
and hence transitions $P\comp Q \dtrans{\vec{a}}{\vec{b}} P'\comp Q'$ and
$R\comp S \dtrans{\vec{a'}}{\vec{b'}} R'\comp S'$, which combine via a single
application of $\ruleLabel{$\ten$}$ to give 
$(P\comp Q)\ten (R\comp S) \dtrans{\vec{aa'}}{\vec{bb'}} (P'\comp Q')\ten (R'\comp S')$.
The converse is similar. Parts (v) and (vi) are trivial. 
\end{proof}

As a consequence of (i) and (ii) there is a category $\W$ that has the natural numbers
as objects and terms of sort $\sort{k}{l}$ quotiented by bisimilarity as arrows from $k$ to $l$;
for $P\typ\sort{k}{l}$ let $[P]$ to denote $P$'s equivalence class wrt bisimilarity -- 
then $[P]:m\to n$ is the arrow of $\W$. 
The identity morphism on $m$ is  $[\id_m]$. Composition
$[P]:k\to l$ with $[Q]:l\to m$ is $[P\comp Q]:k\to m$.\footnote{Well-defined due to Theorem~\ref{thm:coordinationCongruence}.}
Parts (iii) and (iv) imply that $\W$ is monoidal with a strictly associative tensor.
Indeed, on objects let $m\ten n\Defeq m+n$ and
on arrows $[P]\ten[Q]\Defeq [P\ten Q]$.\footnote{Well-defined due to Theorem~\ref{thm:coordinationCongruence}.}
The identity for $\ten$ is clearly $0$.
Subsequently
(v) and (vi) imply that $\W$ is symmetric monoidal.
Equations (i)--(iv) also justify the use of the graphical syntax: roughly,
rearrangements of the representation in space result in syntactically different but
bisimilar systems.


\section{Closed structure}
\label{sec:closed}

Here we elucidate the closed structure of $\W$: the wire constants $\unit$ and
$\counit$ play an important role.
\begin{lem} \label{lem:unitcounit} 
$\unit\ten\id\comp\id\ten\counit \sim \id \sim \id\ten\unit\comp\counit\ten\id$ \qquad
$\left( \lowerPic{.3pc}{width=.6cm}{s} \sim\,\ \lowerPic{-.1pc}{width=.5cm}{identityWire}
\,\ \sim\lowerPic{.3pc}{width=.6cm}{z} \right)$. 
\end{lem}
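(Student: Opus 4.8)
The plan is to exploit the fact that $\unit$, $\counit$ and $\id$ are all wire constants, so each side of both equations is built from constants using only $\comp$ and $\ten$ and therefore has a single-state \textsc{lts} (the observation that such expressions have a single state). Consequently the only candidate bisimulation is the relation pairing the unique chosen state of each composite with the unique state of $\id$, and the whole proof reduces to checking that the two single-state systems carry exactly the same self-loop transitions; the symmetric closure required of a bisimulation is then immediate. I note that, since $\ten$ binds tighter than $\comp$, the left-hand term is $S\Defeq(\unit\ten\id)\comp(\id\ten\counit)\typ\sort{1}{1}$.

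First I would compute the transitions of the two factors of $S$. Applying \ruleLabel{Ten} to the characteristic rules $\unit\dtrans{}{a\labelSep a}\unit$ and $\id\dtrans{b}{b}\id$ shows that $\unit\ten\id\typ\sort{1}{3}$ has exactly the transitions $\unit\ten\id\dtrans{b}{a\labelSep a\labelSep b}\unit\ten\id$ for $a,b\in L$; symmetrically $\id\ten\counit\typ\sort{3}{1}$ has exactly $\id\ten\counit\dtrans{c\labelSep d\labelSep d}{c}\id\ten\counit$ for $c,d\in L$. Reflexivity and the global rules \ruleLabel{$\iota$L}, \ruleLabel{$\iota$R} add nothing new here, since on a single state they only pre- or post-compose a self-loop with another self-loop, and the reflexive loop is itself the instance $a=b=\iota$.

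Next I would apply \ruleLabel{Cut}, whose side condition forces the lower label $a\labelSep a\labelSep b$ of the first factor to equal the upper label $c\labelSep d\labelSep d$ of the second. Matching positions gives $c=a$, $d=a$ and $d=b$, hence $a=b=c=d$, so the only transitions of $S$ are the self-loops $S\dtrans{a}{a}S$ for $a\in L$. This is precisely the transition relation of $\id$ (rule \ruleLabel{Id}), so the pairing of the two single states is a bisimulation and $S\sim\id$. The right-hand equation is handled identically: \ruleLabel{Ten} yields upper label $a$ over lower label $a\labelSep b\labelSep b$ for $\id\ten\unit$, and upper label $c\labelSep c\labelSep d$ over lower label $d$ for $\counit\ten\id$, and the \ruleLabel{Cut} matching again collapses everything to a single signal, recovering the transitions of $\id$.

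I expect no serious obstacle, only the bookkeeping of where the label matching ``bends the wire around''. The one point deserving care is the enumeration of \emph{all} transitions of the composites, namely being sure that \ruleLabel{Refl}, \ruleLabel{$\iota$L} and \ruleLabel{$\iota$R} generate nothing beyond the self-loops identified above. Because every state in sight is unique this is immediate from the single-state observation; but if one wishes to argue fully rigorously it can be justified by invoking Lemma~\ref{lem:traces}(i)--(ii), whose trace decomposition collapses onto the single states and reduces exactly to the \ruleLabel{Cut}/\ruleLabel{Ten} matching computed above.
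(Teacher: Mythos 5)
Your proof is correct. The paper actually states Lemma~\ref{lem:unitcounit} without any proof, so there is nothing to compare against; your direct computation is the argument the author evidently considered routine. The one genuine subtlety --- that \ruleLabel{$\iota$L}/\ruleLabel{$\iota$R} wrapped around \ruleLabel{Cut} might produce composite transitions whose labels are not obtainable from a single \ruleLabel{Cut} of the factors' self-loops --- is exactly what Lemma~\ref{lem:traces}(i) rules out here: since every term involved has a single state, the trace decomposition shows that the observable label of any transition of the composite is the label of the middle \ruleLabel{Cut}-matched pair, so your enumeration is exhaustive and the single-state pairing is indeed a bisimulation.
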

Given $k\in \N$, define recursively
\begin{multline*}
\unit_1 \Defeq \unit \text{ and }
\unit_{n+1} \Defeq \unit \comp {\id\ten\unit_{n}\ten\id} 
\text{ and dually }
\counit_1 \Defeq \counit \text{ and } 
\counit_{n+1} \Defeq {I_n\ten\counit\ten I_n} \comp \counit_n.
\end{multline*}
Let $\unit_0,\counit_0=\id_0\Defeq 0$.

\begin{lem}\label{lem:unitcounitgen}
For all $n\in \N$, 
${\unit_n \ten \id_n} \comp {\id_n\ten \counit_n} 
\sim \id_n \sim 
{\id_n\ten \unit_n} \comp {\counit_n\ten\id_n}$. 
\end{lem}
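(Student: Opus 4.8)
The plan is to induct on $n$, with Lemma~\ref{lem:unitcounit} doing the work of straightening one wire and the inductive hypothesis straightening the other $n$. I reason equationally modulo bisimilarity throughout: since $\sim$ is a congruence (Theorem~\ref{thm:coordinationCongruence}) and $\W$ satisfies the symmetric-monoidal laws of Lemma~\ref{lem:catAxioms}, every regrouping of wires below is a legitimate identity in $\W$, so the argument is a string-diagram manipulation. It suffices to treat the left-hand equation $\unit_n\ten\id_n\comp\id_n\ten\counit_n\sim\id_n$; the right-hand one is proved by the same induction using the right-hand half of Lemma~\ref{lem:unitcounit}. For the base case $n=0$ we have $\unit_0=\counit_0=\id_0=0$, so both composites collapse to the empty $\sort{0}{0}$-term and the claim reads $0\sim0$; the case $n=1$ is exactly Lemma~\ref{lem:unitcounit}.

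For the inductive step I assume the identity for $n$ and expand the recursive definitions $\unit_{n+1}\Defeq\unit\comp(\id\ten\unit_n\ten\id)$ and $\counit_{n+1}\Defeq(\id_n\ten\counit\ten\id_n)\comp\counit_n$ inside $\unit_{n+1}\ten\id_{n+1}\comp\id_{n+1}\ten\counit_{n+1}$. Using associativity of $\comp$ and $\ten$ (Lemma~\ref{lem:catAxioms}(i),(iii)), the interchange law (iv), and the identity laws (ii), I would regroup the term so that the outer cap created by $\unit$ and the outer cup consumed by $\counit$ flank a single wire --- an instance of the one-wire snake --- while the nested factors $\unit_n$ and $\counit_n$ run along the inner $n$ wires as the $n$-wire snake. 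Applying Lemma~\ref{lem:unitcounit} straightens the outer wire to $\id_1$, leaving $\unit_n\ten\id_n\comp\id_n\ten\counit_n$ flanked by identities, which the inductive hypothesis rewrites to $\id_n$; together these give $\id_1\ten\id_n\sim\id_{n+1}$.

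The crux is this regrouping step. Because $\unit_{n+1}$ wraps the $n$-fold cap $\unit_n$ strictly inside the new outer cap (and dually for $\counit_{n+1}$), the $2(n+1)$ internal wires are not pre-sorted into an outer pair plus an inner block; some applications of the symmetry axioms (v),(vi) are needed to slide the outer cap and cup past the inner snake into adjacent position before interchange (iv) can split the composite into the one-wire snake tensored with the $n$-wire snake. Graphically this is just the statement that an $(n{+}1)$-fold snake is an $n$-fold snake with one extra loop wrapped around it, and that each loop straightens independently; the only real labour is confirming that the monoidal and symmetry laws license peeling the loops apart. Once the outer loop is isolated, the inductive hypothesis finishes the argument.
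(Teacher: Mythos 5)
Your skeleton matches the paper's: induction on $n$, with $n=0$ trivial, $n=1$ being exactly Lemma~\ref{lem:unitcounit}, and an inductive step that expands the recursive definitions of $\unit_{n+1}$ and $\counit_{n+1}$ and reduces the $(n{+}1)$-fold snake to the one-wire snake together with the $n$-wire snake. The base cases are fine. The problem is that the ``regrouping'' in the inductive step --- which you correctly identify as the crux --- is the entire mathematical content of the lemma, and you do not carry it out; you defer it with ``the only real labour is confirming that the monoidal and symmetry laws license peeling the loops apart.'' That labour \emph{is} the proof, so as it stands the argument has a genuine gap.

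Moreover, the mechanism you sketch for closing it is off-target. No symmetry is needed, and the route through axioms (v),(vi) of Lemma~\ref{lem:catAxioms} --- sliding one leg of the outer cap across the inner block so that the composite becomes a literal tensor of two disjoint snakes --- would additionally require facts such as $\unit\comp\tw_{1,1}\sim\unit$, i.e.\ symmetry of the cap itself. That is a property of the particular constant $\unit$, not a consequence of the symmetric monoidal axioms, and it is proved nowhere in the paper. The paper's calculation avoids all of this: after expanding the definitions, the insertion of the inner cap $\unit_n$ and the application of the outer cup $\counit$ touch disjoint wires of the middle boundary, so the interchange law (iv) alone commutes ${\id\ten\unit_n\ten\id_{n+2}}\comp{\id_{2n+1}\ten\counit\ten\id_n}$ into ${\id\ten\counit\ten\id_n}\comp{\id\ten\unit_n\ten\id_n}$. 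After this single swap the outer cap and cup are adjacent in the composition order and combine (by interchange again) into $({\unit\ten\id}\comp{\id\ten\counit})\ten\id_n$, while the inner ones combine into $\id\ten({\unit_n\ten\id_n}\comp{\id_n\ten\counit_n})$; Lemma~\ref{lem:unitcounit} and the inductive hypothesis then finish. The nested snake is a planar diagram and no wires ever cross. To repair your proof, drop the appeal to symmetry and write out this interchange computation explicitly (and its dual for the right-hand equation).
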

\begin{proof}
For $n=0$ the result is trivial. For $n=1$, it is the conclusion of
Lemma~\ref{lem:unitcounit}. For $n>1$ the result follows by
straightforward inductions: 
\begin{align*}
{\unit_{n+1} \ten \id_{n+1}} \comp\; &{\id_{n+1}\ten \counit_{n+1}} 
\\
&= {(\unit \comp \id\ten\unit_{n}\ten\id)\ten \id_{n+1}} 
   \comp 
   {\id_{n+1}\ten (\id_n\ten\counit\ten \id_n\comp\counit_n )} \\
&\sim {\unit \ten \id_{n+1}} \comp  {\id\ten \unit_n \ten \id_{n+2}}
\comp {\id_{2n+1}\ten \counit \ten \id_n} \comp {\id_{n+1}\ten \counit_{n}} \\
&\sim  {\unit \ten \id_{n+1}} \comp {\id\ten \counit \ten \id_n}
\comp {\id\ten \unit_n \ten \id_n} \comp {\id_{n+1}\ten \counit_n} \\
&\sim  {({\unit \ten \id}\comp {\id\ten \counit})\ten\id_n} \comp
{\id \ten ( \unit_n \ten \id_n \comp \id_n \ten \counit_n)} \\
&\sim I_{n+1}
\end{align*}
\begin{align*}
{\id_{n+1}\ten \unit_{n+1}} \comp\;& {\counit_{n+1}\ten\id_{n+1}} \\
&= { \id_{n+1} \ten ( \unit \comp {\id\ten\unit_{n}\ten\id} )} 
\comp
{ ( {\id_n\ten\counit\ten \id_n} \comp \counit_n ) \ten \id_{n+1} } \\
&\sim {\id_{n+1}\ten \unit} \comp {\id_{n+2}\ten \unit_n\ten \id} \comp
{\id_n \ten \counit \ten \id_{2n+1}} \comp {\counit_n \ten \id_{n+1} } \\
&\sim {\id_{n+1}\ten \unit} \comp {\id_{n} \ten \counit \ten \id}
\comp {\id_n \ten \unit_n \ten \id} \comp {\counit_n \ten \id_{n+1}} \\
&\sim {\id_n \ten ({\id \ten \unit} \comp {\counit \ten \id})} \comp
      {( {\id_n\ten \unit_n} \comp {\counit_n \ten \id_n} ) \ten \id} \\
&\sim \id_{n+1} 
\end{align*}
\end{proof}

With Lemma~\ref{lem:unitcounitgen} we have shown that $\W$ is a compact-closed
category~\cite{Kelly1980}. Indeed, for all $n\geq 0$ let $n^\star\Defeq n$;
we have shown the existence of arrows
\begin{equation}\label{eq:compclosed}
[\unit_n]: 0 \to n\ten n^\star \qquad [\counit_n]:n^\star\ten n\to 0
\end{equation}
that satisfy 
$[\unit_n]\ten id_n \comp \id_n \ten [\counit_n]= \id_n 
= id_n\ten [\unit_n] \comp [\counit_n]\ten \id_n$.\footnote{The $(-)^\star$ operation on 
objects of $\W$
may seem redundant at this point but it plays a role when considering the more
elaborate sorts of directed wires in \S\ref{sec:directed}.}

\smallskip
While the following is standard category theory, it may be useful for the casual
reader to see how this data implies a closed structure.
For $\W$ to be closed wrt $\ten$ we need, for any $l,m\in\W$ an object 
$l\lollipop m$ and a term 
\[ev_{l,m}\typ\sort{(l\lollipop m) \ten l}{m}\]
that together satisfy the following universal property: for any term 
$P\typ \sort{k\ten l}{m}$ there exists 
a unique (up to bisimilarity) term  $Cur(P)\typ\sort{k}{l\lollipop m}$ such that the diagram below 
commutes:
\[
\xymatrix{
{ (l\lollipop m) \ten l } \ar[r]^-{[ev_{l,m}]} & {m} \\
  {k \ten l } \ar[ur]_{[P]} \ar[u]^{[Cur(P)\ten \id_l]}
}
\]
Owing to the existence of $\unit_n$ and $\counit_n$
we have $l\lollipop m \Defeq m+l$, 
$ev_{l,m} \Defeq \id_m \ten \counit_l$ and
$Cur(P)\Defeq {\id_k\ten \unit_l} \comp {P \ten \id_l}$.
Indeed, it is easy to verify universality:
suppose that there exists $Q\typ \sort{k}{m+l}$ such that
${Q \ten \id_l} \comp ev_{l,m} \sim P$. Then 
\begin{align*}
Cur(P) &= {\id_k \ten \unit_l} \comp { P \ten \id_l}  \\
& \sim \id_k\ten \unit_l \comp (Q \ten \id_l \comp ev_{l,m})\ten \id_l \\
& \sim \id_k \comp Q \comp {\id_{m}\ten ( {\id_l \ten \unit_l} \comp {\counit_l \ten \id_l})} \\
& \sim Q
\end{align*}

Starting from the situation described
in \eqref{eq:compclosed} and using completely general reasoning,
one can define a contravariant functor 
$(-)^\star:\W^{op}\to\W$ by $A\mapsto A^\star$ on objects and mapping
$f:A\to B$ to the composite
\[
B^\star \xrightarrow{B^\star \ten \unit_A} B^\star\ten A\ten A^\star 
\xrightarrow{B^\star\ten f\ten A^\star} B^\star\ten B 
\ten A^\star \xrightarrow {\counit_B\ten A^\star}
A^\star.
\]
Indeed, the fact that $(-)^\star$ preserves identities is due to one of the
triangle equations and the fact that it preserves composition is implied by the other.
We can describe $(-)^\star$ on $\W$ directly as the 
up-to-bisimulation correspondent of a structurally recursively defined syntactic transformation
that, intuitively, rotates a wire calculus term by 180 degrees.
First, define the endofunction $(-)^\star$ on  strings generated by~\eqref{eq:prefix} to
be simple string reversal (letters are either free variables, bound variables,
signals or `$\iota$').
\begin{defn}
Define an endofunction $(-)^\star$ on wire calculus terms by structural 
recursion on syntax as follows:
\[
(R\comp S)^\star \Defeq S^\star \comp R^\star \qquad (R\ten S)^\star \Defeq S^\star \ten R^\star
\]
\[
(\prefix{u}{v}{R})^\star \Defeq \prefix{v^\star}{u^\star}{R^\star} 
\qquad 
(R+S)^\star \Defeq R^\star+ S^\star 
\qquad 
(\muBind{Y}{R})^\star \Defeq \muBind{Y}{R^\star}
\]
\end{defn}

Notice that  $\unit^\star = \counit$ and, by a simple structural induction, $P^{\star\star}=P$.
This operation is compatible with $\unit$ and $\counit$ in the following sense.
\begin{lem}
Suppose that $P\typ\sort{k}{l}$. Then 
${\unit_k\comp P\ten \id_k} \sim {\unit_l\comp \id_l \ten P^\star}$. Dually,
$P\ten \id_l\comp\counit_l \sim \id_k\ten P^\star\comp \counit_k$.
\qed
\end{lem}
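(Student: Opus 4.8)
The plan is to reduce both equations to a single \emph{transition-reversal} property of the syntactic rotation, and then to assemble the required bisimulations from the transition structure of $\unit_n$ and $\counit_n$. Writing $\vec{a}^\star$ for the reversal of a word $\vec{a}\in L^*$ (so $\vec{\iota}^\star=\vec{\iota}$ and $(\vec{a}\vec{c})^\star=\vec{c}^\star\vec{a}^\star$), the key property I would first establish is that for every term $P$,
\[
P\dtrans{\vec{a}}{\vec{b}}P' \quad\text{iff}\quad P^\star\dtrans{\vec{b}^\star}{\vec{a}^\star}(P')^\star.
\]
The forward direction goes by induction on the derivation. The cases \ruleLabel{Refl}, \ruleLabel{$\iota$L} and \ruleLabel{$\iota$R} are immediate since $\vec{\iota}^\star=\vec{\iota}$. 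The \ruleLabel{Cut} and \ruleLabel{Ten} cases are exactly where the order-reversal in $(R\comp S)^\star=S^\star\comp R^\star$ and $(R\ten S)^\star=S^\star\ten R^\star$ earns its keep: applying the hypotheses to the two premises and recomposing produces a transition whose labels carry the correct reversal. The \ruleLabel{Pref} case uses that reversal commutes with substitution, $v^\star|_\sigma=(v|_\sigma)^\star$ and $P^\star|_\sigma=(P|_\sigma)^\star$, the latter a routine structural induction; for \ruleLabel{Rec} I would first check $(P[\muBind{Y}{P}/Y])^\star=P^\star[\muBind{Y}{P^\star}/Y]$, again by structural induction, and then invoke the inductive hypothesis. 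The reverse direction is then free from the involution $P^{\star\star}=P$. I would record the immediate corollary that $\star$ preserves bisimilarity: if $S$ is a bisimulation then so is $\set{(a^\star,b^\star)}{(a,b)\in S}$.

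Next I would observe that each $\unit_n$ is a single-state term whose only transitions are the ``mirror'' ones $\unit_n\dtrans{}{\vec{x}\labelSep\vec{x}^\star}\unit_n$ for $\vec{x}\in L^*$ of length $n$ (immediate for $\unit_1$ from its defining rule, and by an easy induction using $\unit_{n+1}\Defeq\unit\comp\id\ten\unit_n\ten\id$), and dually for $\counit_n$. For the first equation I would exhibit the candidate bisimulation
\[
S\Defeq\set{\bigl(\unit_k\comp(P\ten\id_k),\ \unit_l\comp(\id_l\ten P^\star)\bigr)}{P\typ\sort{k}{l}}.
\]
Given a transition out of the left-hand term, Lemma~\ref{lem:traces}(i) together with the single-state nature and reflexivity/$\iota$-transitivity of $\unit_k$ forces it to collapse to one mirror transition $\unit_k\dtrans{}{\vec{x}\labelSep\vec{x}^\star}$ cut against a transition of $P\ten\id_k$; Lemma~\ref{lem:traces}(ii) splits the latter into $P\dtrans{\vec{x}}{\vec{p}}P'$ and the pass-through $\id_k\dtrans{\vec{x}^\star}{\vec{x}^\star}$, so the emitted label is $\vec{p}\labelSep\vec{x}^\star$. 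The reversal property turns $P\dtrans{\vec{x}}{\vec{p}}P'$ into $P^\star\dtrans{\vec{p}^\star}{\vec{x}^\star}(P')^\star$, which I recombine with $\id_l\dtrans{\vec{p}}{\vec{p}}$ and the mirror transition $\unit_l\dtrans{}{\vec{p}\labelSep\vec{p}^\star}$ via \ruleLabel{Ten} and \ruleLabel{Cut} to give a transition of $\unit_l\comp(\id_l\ten P^\star)$ with the same label $\vec{p}\labelSep\vec{x}^\star$ and successor again in $S$. The converse inclusion is symmetric, using $P^{\star\star}=P$ and the reverse direction of the reversal property, so $S$ is a bisimulation.

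Finally, the dual equation follows by the identical argument with $\counit_l,\counit_k$ and their mirror transitions $\counit_n\dtrans{\vec{x}\labelSep\vec{x}^\star}{}\counit_n$ in place of $\unit_l,\unit_k$, appealing to the same reversal property; alternatively one can apply $(-)^\star$ to the first equation, using contravariance of $\star$ on $\comp$ and $\ten$, the facts $\id^\star=\id$ and $\unit_n^\star\sim\counit_n$ (a short induction from the definitions and the categorical axioms of Lemma~\ref{lem:catAxioms}), and the corollary that $\star$ preserves $\sim$. I expect the main obstacle to be the reversal property itself---in particular the \ruleLabel{Rec} case, where one must verify that $\star$ commutes with the capture-avoiding substitution of the unfolding---together with the careful handling of $\iota$-padding when invoking Lemma~\ref{lem:traces} to peel a transition off the single-state $\unit$; once these are dispatched the remaining steps are routine recombinations of SOS rules.
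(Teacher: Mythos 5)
Your proof is correct, and it fills in exactly what the paper leaves implicit: the paper states this lemma with no proof at all (just a \qed), relying on the reader to combine the transition-reversal property ($P\dtrans{\vec{a}}{\vec{b}}R$ iff $P^\star\dtrans{\vec{b}^\star}{\vec{a}^\star}R^\star$, which is the paper's Lemma~\ref{lem:star}, proved independently by induction) with the ``mirror'' characterisation of the transitions of $\unit_n$ and $\counit_n$ and the decomposition Lemma~\ref{lem:traces} --- precisely the ingredients you assemble. The only points needing care are the ones you already flag: the \ruleLabel{Rec} case of the reversal property (commutation of $(-)^\star$ with unfolding substitution) and the observation that the $\id_k$ strand forces the $\iota$-padding steps of $\unit_k$ to be genuinely silent, so the trace from Lemma~\ref{lem:traces} collapses to a single \ruleLabel{Cut}.
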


The conclusion of the following lemma
implies that for any $k,l\geq 0$ one can define a function 
$(-)^\star:\W(k,l)\to\W(l,k)$ by setting $[P]^\star\Defeq [P^\star]$. 
\begin{lem}\label{lem:star}
If $P\typ \sort{k}{l}$ then $P^\star\typ\sort{l}{k}$. Moreover 
$P\dtrans{\vec{a}}{\vec{b}} R$ iff $P^\star \dtrans{\vec{b}^\star}{\vec{a}^\star} {R}^\star$. 
Consequently $P\sim Q$ iff $P^\star\sim Q^\star$.
\end{lem}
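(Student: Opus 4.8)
The plan is to prove the three assertions in order, bootstrapping each from its predecessor and exploiting the involution $P^{\star\star}=P$ (already noted in the text) to halve the work at every stage.

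First, sort preservation. I would argue by structural induction on $P$ using the sorting rules of \figref{fig:sorting}. Each clause of the definition of $(-)^\star$ is tailored to swap boundaries: for $(R\comp S)^\star=S^\star\comp R^\star$ the shared middle sort $n$ is untouched while the outer sorts swap; for $(R\ten S)^\star=S^\star\ten R^\star$ the commutativity of $+$ on sorts absorbs the reordering; for the prefix, reversing $u$ and $v$ preserves their lengths and the free/bound variable sets, so the side conditions of the prefix rule transfer verbatim. The choice and recursion cases are immediate. This establishes $P\typ\sort{k}{l}\Rightarrow P^\star\typ\sort{l}{k}$.

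The heart of the lemma is the transition correspondence, which I would prove by induction on the derivation of $P\dtrans{\vec{a}}{\vec{b}}R$; the ``if'' direction then follows by applying the ``only if'' direction to $P^\star$ and using $P^{\star\star}=P$ and $\vec{c}^{\star\star}=\vec{c}$. Before starting I would record three routine facts, each a one-line structural induction: (a) string reversal reverses concatenation, $(\vec{x}\vec{y})^\star=\vec{y}^\star\vec{x}^\star$, and fixes $\vec\iota$; (b) $(-)^\star$ commutes with signal substitution, $(u|_\sigma)^\star=u^\star|_\sigma$ and $(P|_\sigma)^\star=P^\star|_\sigma$; and (c) $(-)^\star$ commutes with process-variable substitution, $(P[T/Y])^\star=P^\star[T^\star/Y]$. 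With these in hand the inductive step for each \textsc{sos} rule is a mechanical relabelling. The two cases that actually use the geometry are \ruleLabel{Cut} and \ruleLabel{Ten}: in \ruleLabel{Cut} the premises $P\dtrans{\vec{a}}{\vec{c}}Q$ and $R\dtrans{\vec{c}}{\vec{b}}S$ become, by induction, $P^\star\dtrans{\vec{c}^\star}{\vec{a}^\star}Q^\star$ and $R^\star\dtrans{\vec{b}^\star}{\vec{c}^\star}S^\star$, which recompose---note the component order reverses, so the shared word $\vec{c}^\star$ still matches---to $R^\star\comp P^\star\dtrans{\vec{b}^\star}{\vec{a}^\star}S^\star\comp Q^\star$, exactly $(P\comp R)^\star$; in \ruleLabel{Ten}, fact (a) turns $(\vec{b}\vec{d})^\star$ into $\vec{d}^\star\vec{b}^\star$ so that the reversed tensor $R^\star\ten P^\star$ receives the correctly ordered label. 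The \ruleLabel{Pref} case invokes (b), the \ruleLabel{Rec} case invokes (c), and the three choice rules use (a) to see that $\vec{ab}\neq\vec\iota$ is preserved under reversal.

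Finally, the bisimilarity statement. Given a bisimulation $S$ witnessing $P\sim Q$, I would show that $S^\star\Defeq\set{(A^\star,B^\star)}{(A,B)\in S}$ witnesses $P^\star\sim Q^\star$: it contains $(P^\star,Q^\star)$, and a move $A^\star\dtrans{\vec{c}}{\vec{d}}A'$ translates, via the correspondence, back to a move $A\dtrans{\vec{d}^\star}{\vec{c}^\star}A'^\star$, which $S$ matches by some $B\dtrans{\vec{d}^\star}{\vec{c}^\star}B'$ with $(A'^\star,B')\in S$; translating forward again yields $B^\star\dtrans{\vec{c}}{\vec{d}}B'^\star$ with $(A',B'^\star)\in S^\star$. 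Since $(S^{-1})^\star=(S^\star)^{-1}$, the same argument handles the inverse, so $S^\star$ is a bisimulation; the converse implication is the involution applied once more. The only step that demands real care is the label bookkeeping in the transition correspondence (the \ruleLabel{Cut}, \ruleLabel{Ten} and \ruleLabel{Rec} cases together with the substitution facts (b), (c)); everything else is formal.
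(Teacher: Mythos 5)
Your proof is correct and follows essentially the same route as the paper, whose entire proof is the single phrase ``Structural induction on $P$''; your version simply supplies the details that the paper omits. The one refinement worth noting is that for the transition correspondence you rightly induct on the \emph{derivation} rather than on the term (necessary to handle \ruleLabel{Rec}, \ruleLabel{$\iota$L} and \ruleLabel{$\iota$R}), which is the standard way to make the paper's shorthand precise.
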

\begin{proof}
Structural induction on $P$. 
\end{proof}
Moreover, by definition we have: 
\[
([P];[Q])^\star=[P\comp Q]^\star=[(P\comp Q)^\star]=[Q^\star\comp P^\star]=[Q^\star]\comp[P^\star]=[Q]^\star\comp [P]^\star
\]
and $[I_k]^\star=[I_k]$, thus confirming the functoriality of
$(-)^\star:\W^{op}\to\W$.


\section{Directed wires}
\label{sec:directed}

In the examples of~\S\ref{sec:flipFlops} there is a suggestive yet informal 
\emph{direction} of signals: the components set their state according to the signal
\emph{coming in} on the left and \emph{output} a signal that corresponds
to their current state on the right. It is useful to make this formal so that
systems of components are not wired in unintended ways. Here we take this leap, which
turns out not to be very difficult. Intuitively, decorating wires with directions
can be thought as a type system in the sense that
the underlying semantics and properties are unchanged, the main difference being that
certain syntactic phrases (ways of wiring components) are no longer allowed.

\begin{defn}[Directed components]\label{defn:directedComponent}\rm
Let $\mathcal{D}\Defeq \{\Left,\Right\}$ and subsequently
$L_d \Defeq (\Sigma+\{\iota\})\times\mathcal{D}$.
Abbreviate $(a,\Left)\in L_d$ by $\overleftarrow{a}$ and $(a,\Right)$ by $\overrightarrow{a}$.
Let $\pi:L_d\to \mathcal{D}$ be the obvious projection.

Let $k,l\in \mathcal{D}^*$.
A $\sort{k}{l}$-\emph{transition}
is a labelled transition of the form 
$\dtrans{\vec{a}}{\vec{b}}$ where 
$\vec{a},\vec{b}\in L_d^*$, 
$\pi^*(\vec{a})=k$ and $\pi^*(\vec{b})=l$.
A $\sort{k}{l}$-\emph{component} $\mathscr{C}$
is a \emph{pointed}, \emph{reflexive} and $\iota$-\emph{transitive} 
\textsc{lts} $(v_0,V,T)$ of $\sort{k}{l}$-transitions. 
\end{defn}

The syntax of the directed version of the wire calculus is obtained by replacing~\eqref{eq:prefix}
with~\eqref{eq:directedPrefix} below: signals are now either inputs ($?$)
or outputs ($!$).
\begin{equation}\label{eq:directedPrefix}
M \bnfEq \epsilon \bnfSep D \bnfSep MM 
\end{equation}
\[
D \bnfEq \inp{A} \bnfSep \out{A}
\]
\[
A \bnfEq \var{x} \bnfSep \lambda\var{x} \bnfSep \iota \bnfSep \sigma\in\Sigma 
\]

Define a map $\overline{(-)}:\mathcal{D}^*\to\mathcal{D}^*$ 
in the obvious way by letting
$\overline{\mathsf{L}} \Defeq \mathsf{R}$ and 
$\overline{\mathsf{R}} \Defeq\mathsf{L}$. 
Now define a function $d$ from terms generated by~\eqref{eq:directedPrefix}   
to $\mathcal{D}^*$ recursively by letting $d(\epsilon)\Defeq\epsilon$, 
$d(uv)\Defeq d(u)d(v)$, $d(\inp{x})=\mathsf{L}$ and $d(\out{x})=\mathsf{R}$.
The terms are sorted with the rules in~\figref{fig:sorting}, after replacing
the rule for prefix with the following:
\[
\reductionRule { \overline{du}=k,\,dv=l,\,
{fr(\prefix{u}{v}{})} \cap {bd(\prefix{u}{v}{})} = \varnothing,\,
{fr(\prefix{u}{v}{})}\subseteq\Gamma \quad 
\typeJudgment{\Gamma,\, bd(\prefix{u}{v}{})}{P}{\sort{k}{l}} }
{ \typeJudgment{\Gamma}{ \prefix{u}{v}{P}}{\sort{k}{l}} }
\]
The semantics is defined with the rules of~\figref{fig:sosRules} where labels
are strings over $L_d$ and \ruleLabel{Pref} is replaced with \ruleLabel{dPref} below.
In the latter rule, given a substitution $\sigma$,
let $e_\sigma$ be the function that
takes a prefix component generated by~\eqref{eq:directedPrefix}
with no occurrences of free signal variables
to $L_d^*$, defined by recursion in the obvious way from its definition on atoms:
$e_\sigma(\lambda x?)=(\sigma x,L)$, $e_\sigma(\lambda x!)=(\sigma x,R)$,
$e_\sigma(a?)=(a,L)$, $e_\sigma(a!)=(a,R)$. Abusing notation, let
$\overline{(-)}$ be the endofunction of $L_d^*$ that switches directions of components. 
Then:
\[
\derivationRule{} 
{\prefix{u}{v}{P} \dtrans{\overline{e_\sigma u}}{e_\sigma v} P|_\sigma}{dPref} 
\]
Below are some examples of wire constants in the directed variant of the calculus:
\[
\id_\Left \Defeq \muBind{Y}{\prefix{\lambda x!}{\lambda x?}{Y}}\typ\sort{\Left}{\Left} 
\qquad
\lowerPic{-.1pc}{width=1cm}{idL} 
\qquad
\derivationRule{}
{\id_\Left  \dtrans{\overleftarrow{a}}{\overleftarrow{a}} \id_\Left }{$\id_\Left$}
\]
\[
\id_\Right \Defeq \muBind{Y}{\prefix{\lambda x?}{\lambda x!}{Y}}\typ\sort{\Right}{\Right}
\qquad
\lowerPic{-.1pc}{width=.8cm}{idR} 
\qquad
\derivationRule{}
{\id_\Right  \dtrans{\overrightarrow{a}}{\overrightarrow{a}} \id_\Right}{$\id_\Right$}
\]
\[
\unit_\Left \Defeq \muBind{Y}{\prefix{}{\lambda x?\lambda x!}{Y}}\typ\sort{\epsilon}{\Left\Right}
\quad
\lowerPic{.4pc}{width=.7cm}{unitL} 
\quad
\derivationRule{}
{\unit_\Left \dtrans{\phantom{\quad\ }}{\overleftarrow{a}\labelSep \overrightarrow{a}} \unit_\Left}{$\unit_\Left$}
\]
\[
\counit_\Left \Defeq \muBind{Y}{\prefix{\lambda x?\lambda x!}{}{Y}}\typ\sort{\Right\Left}{\epsilon}
\quad
\lowerPic{.4pc}{width=.5cm}{counitL} 
\quad
\derivationRule{}
{\counit_\Left \dtrans{\overrightarrow{a}\labelSep \overleftarrow{a}}{} \counit_L}{$\counit_\Left$}
\]
At this point the $F_x$ components of \S\ref{sec:flipFlops} can modelled in the
directed wire calculus. The results of \S\ref{sec:properties} and \S\ref{sec:closed}
hold mutatis mutandis with the same proofs -- this is not surprising as only the
semantics of prefix and the structure of sorts is affected.

\section{Conclusion and future work}

As future work, one should take advantage of the insights of Selinger~\cite{Selinger1997}:
it appears that one can express Selinger's queues and buffers within the wire calculus 
and thus model systems with various kinds of asynchronous communication. 
Selinger's sequential composition, however, has an interleaving semantics.

From a theoretical point of view it should be interesting to test the expressivity of
the wire calculus with respect to well-known interleaving calculi such as CCS 
and non-syntactic formalisms such as Petri nets.

\paragraph{Acknowledgement.} The author thanks the anonymous referees and the participants
in the ICE forum for their suggestions, many of which have improved the paper. Early
discussions with R.F.C. Walters and Julian Rathke were fundamental in stabilising the
basic technical contribution. Special thanks go to Fabio Gadducci and 
Andrea Corradini.
{\small
\bibliographystyle{eptcs}
\bibliography{jab}
}

\end{document}